\documentclass[10pt,a4paper]{article}
\usepackage[margin=22mm]{geometry}
\usepackage[T1]{fontenc}
\usepackage{lmodern}
\usepackage{amsmath,amssymb,amsthm,booktabs,fancyvrb}
\usepackage{graphicx}
\usepackage{float}
\usepackage[labelfont=bf,skip=4pt]{caption}
\usepackage[authoryear,round]{natbib}
\usepackage[hidelinks]{hyperref}
\usepackage{microtype}
\hypersetup{
  pdftitle={Three Improved Lower Bounds for Binary Error-Correcting Codes},
  pdfauthor={Kazuki Ota, Takayuki Osa, Tatsuya Harada}
}
\theoremstyle{plain}
\newtheorem{definition}{Definition}[section]
\newtheorem{proposition}[definition]{Proposition}

\begin{document}
\begin{center}
{\LARGE\bfseries Three Improved Lower Bounds for\\
Binary Error-Correcting Codes\par}
\vspace{8pt}
{\normalsize Kazuki Ota\textsuperscript{1,2,*}\quad
Takayuki Osa\textsuperscript{2}\quad
Tatsuya Harada\textsuperscript{1,2}\par}
\vspace{3pt}
{\small\textsuperscript{1}The University of Tokyo, Japan\par}
{\small\textsuperscript{2}RIKEN Center for Advanced Intelligence Project, Japan\par}
\end{center}
\begingroup
\renewcommand{\thefootnote}{*}
\footnotetext[1]{Corresponding author: Kazuki Ota
(\href{mailto:ota@mi.t.u-tokyo.ac.jp}{\nolinkurl{ota@mi.t.u-tokyo.ac.jp}}).}
\endgroup
\vspace{3pt}

\begingroup
\setlength{\leftmargini}{0.10\textwidth}
\begin{abstract}
Error-correcting codes are a foundational technology for representing
information robustly against noise in communication and data storage.
The binary code packing problem, which seeks to maximize the number of
codewords at a fixed code length and minimum distance, is a fundamental
problem in coding theory that remains open for many parameter choices.
We give explicit binary codes with parameters
$(n,M,d)=(22,84,9)$, $(24,196,9)$, and $(25,65,11)$, where $n$, $M$, and $d$
denote the code length, number of codewords, and minimum distance,
respectively. These establish the lower bounds
$A_2(22,9)\ge84$, $A_2(24,9)\ge196$, and $A_2(25,11)\ge65$,
where $A_2(n,d)$ denotes the maximum possible number of codewords, improving
the respective lower bounds 80, 192, and 64 in the public table of general
binary codes. Most notably, the third result improves the lower bound
supplied by a lexicographic construction for the first time in 66 years,
while the first two improve bounds that have stood for 21 years.
The new codes represent more messages without changing the code length or
error-correction guarantees, with gains compounding multiplicatively across
blocks. We release all codewords and an exhaustive verifier to enable
independent verification.
\end{abstract}
\endgroup

\section{Introduction}

Error-correcting codes are a foundational technology for recovering original
data from communication noise, media damage, and recognition errors. They
are widely used in applications such as digital communication, QR-code and
barcode recognition, and information storage. Since error-correcting codes directly affect the
reliability and efficiency of communication and storage systems, even small
improvements in code construction can have practical implications at scale.

The task of representing as much information as possible with short bit
strings while preserving error-correction capability can be formulated
mathematically as the code packing problem, which seeks a maximum-cardinality
set of length-$n$ binary words with pairwise Hamming distances of at least $d$.
Decades of coding-theory research have produced strong
constructions and theoretical upper bounds for many parameters
\citep{best,agrell,gijswijt}. Nevertheless, for many parameters, a gap remains
between the best-known constructions and the theoretical upper bounds.

In this work, we improve three lower bounds on the maximum size of binary
codes. Specifically, we give explicit codes containing 84, 196, and 65
codewords for $(n,d)=(22,9)$, $(24,9)$, and $(25,11)$, respectively,
surpassing the previous lower bounds of 80 \citep{ostergard},
192 \citep{elssel}, and 64 \citep{levenshtein}. These results improve two
lower bounds that have stood for 21 years and surpass the 64-codeword bound
supplied by a lexicographic construction dating to 1960, 66 years after its
introduction. With block length and minimum distance unchanged, the ratio of
message-sequence counts between the new and previous constructions grows
exponentially with the number of blocks. We provide all codewords and a
short exhaustive verifier, presenting these bounds in a form that permits
independent verification.

\section{Preliminaries}

The Hamming distance $d_H(x,y)$ is the number of coordinates at which
$x$ and $y$ differ. A binary $(n,M,d)$ code is a set
$S\subseteq\{0,1\}^n$ of $M$ distinct words whose pairwise distances
are at least $d$. Codes need not be linear.
This separation allows nearest-neighbor decoding to correct any pattern of
up to $\lfloor(d-1)/2\rfloor$ bit errors. For example, the four length-$5$
words in Figure~\ref{fig:error-correction-example} have minimum distance $3$,
so any single-bit error can be corrected.

For fixed code length $n$ and minimum distance $d$, increasing the number of
codewords allows more distinct messages to be represented without weakening
the error-correction guarantee. The binary code packing problem, formalized
in Definition~\ref{def:binary-code-packing}, seeks the largest possible set
of codewords satisfying these constraints.

\clearpage
\begin{figure}[H]
\centering
\includegraphics[width=0.9\linewidth]{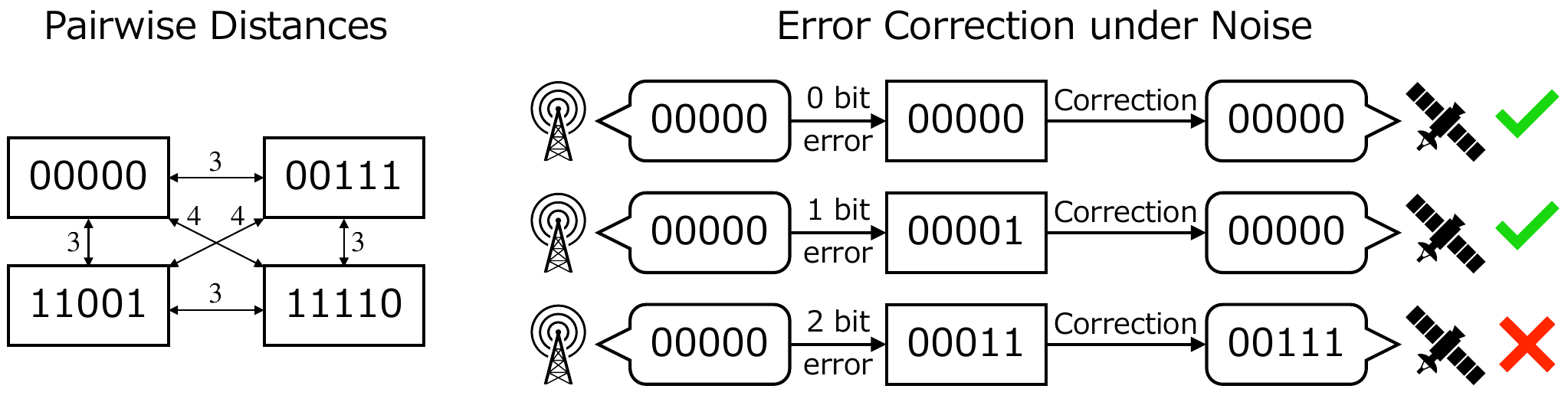}
\caption{Error correction as separation. Nearest-neighbor decoding corrects
any pattern of up to $\lfloor(d-1)/2\rfloor$ bit errors for a code with
minimum Hamming distance $d$. The code
$S=\{00000,00111,11001,11110\}$ has parameters $(n,M,d)=(5,4,3)$.
Left: Pairwise distances are at least three.
Right: Any single-bit error is corrected, whereas the illustrated two-bit
error causes incorrect decoding.}
\label{fig:error-correction-example}
\end{figure}

\begin{definition}[Binary Code Packing Problem]
\label{def:binary-code-packing}
Let $n,d$ be integers with $1\le d\le n$. The binary code packing problem,
whose optimal value is denoted by $A_2(n,d)$, is
\begin{equation*}
\underset{S\subseteq\{0,1\}^n}{\operatorname{maximize}}
\quad |S|
\quad\text{subject to}\quad d_H(x,y)\ge d
\quad\text{for all distinct }x,y\in S.
\end{equation*}
\end{definition}

\section{Our Results}

We provide three improved lower bounds on $A_2(n,d)$
in Proposition~\ref{prop:improved-lower-bounds}.

\begin{proposition}
\label{prop:improved-lower-bounds}
The following lower bounds hold:
\[
 A_2(22,9)\ge84,\qquad A_2(24,9)\ge196,\qquad A_2(25,11)\ge65.
\]
\end{proposition}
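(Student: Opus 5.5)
The proof is by exhibiting the codes. By Definition~\ref{def:binary-code-packing}, $A_2(n,d)\ge M$ as soon as we produce $M$ distinct words in $\{0,1\}^n$ with all pairwise Hamming distances at least $d$. So for each of the three parameter triples $(22,84,9)$, $(24,196,9)$ and $(25,65,11)$ I will write down an explicit list of codewords. I will then certify three things: the list has the claimed cardinality, all words have the correct length and are distinct, and $\min_{x\ne y} d_H(x,y)\ge d$. The certificate is a finite computation. It needs $\binom{84}{2}=3486$, $\binom{196}{2}=19110$ and $\binom{65}{2}=2080$ distance evaluations, each a popcount of an XOR, so an exhaustive verifier checks all pairs directly and no structural lemma is needed for correctness.

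The real work is finding the codes. I would start from the previous best constructions and try to extend or restructure them.
\begin{itemize}
\item For $(25,11)$, the 64-word lexicode, i.e.\ a linear $[25,6,11]$-type code, is the natural seed. Simply adding a 65th word to it is impossible if it is maximal, so I would instead search over unions of cosets of a smaller linear subcode. A natural target is a union of $k$ cosets of a $[25,4]$ or $[25,5]$ code with minimum distance $\ge 11$, with coset leaders chosen by local search or a clique solver on the coset graph.
\item For $(22,9)$ and $(24,9)$, I would similarly impose a symmetry group: a cyclic or quasi-cyclic action, or a group of automorphisms fixing the known 80- and 192-word codes. The problem then reduces to a maximum-clique or exact-cover search over orbits. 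Target orbit sizes would be ones dividing 84 or 196, for example orbits of size 7 or 14, or a 49-cycle structure for 196.
\end{itemize}

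Once the symmetry reduces the search space, I would run a clique heuristic or simulated annealing on the orbit-compatibility graph. Two orbits are adjacent when all cross distances and all intra-orbit distances are at least $d$.

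The main obstacle is this search step. The unrestricted problem is a maximum clique in a graph on $2^{22}$ to $2^{25}$ vertices, and the known bounds have resisted improvement for decades. Choosing the right symmetry group is the crux: too little symmetry leaves the search intractable, while too much excludes the improved codes. I would try first the automorphism groups of the known record codes, and subgroups of them, as prescribed groups. Once any candidate code is found, the verification described above completes the proof of Proposition~\ref{prop:improved-lower-bounds}.
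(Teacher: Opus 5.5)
Your reduction matches the paper's: $A_2(n,d)\ge M$ is certified by any explicit set of $M$ distinct words of $\{0,1\}^n$ with pairwise distances at least $d$. The check is then $\binom{84}{2}=3486$, $\binom{196}{2}=19110$ and $\binom{65}{2}=2080$ XOR--popcount evaluations.

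The gap is that you never produce the witnesses, and they are the entire mathematical content of the proposition. In their place you give a search strategy whose success you do not argue. You yourself call the search the unresolved crux, and these bounds had resisted improvement for decades. So nothing in the proposal shows that a $(22,84,9)$, $(24,196,9)$ or $(25,65,11)$ code exists; ``once any candidate code is found'' is precisely the assertion to be proved. The paper's proof consists of what is missing here: the complete lists of $84$, $196$ and $65$ codewords in Section~\ref{sec:verification}, to which the exhaustive pairwise check is applied. Without such lists, or a construction proved to meet the distance bound, there is no proof.

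The search plan also has a concrete defect for $(25,11)$. A union of $k$ cosets of a linear $[25,m]$ code has exactly $k\cdot 2^m$ words, so it can never have $65$ words, which is odd. To beat $64$ this way you would need $80$ words (five cosets of a $[25,4]$ code) or $96$ words (three cosets of a $[25,5]$ code). That is a far larger jump than the $+1$ being claimed. A structured search aimed at $65$ must allow a non-coset part, for example $64$ structured words plus one extra word, or a group action with a fixed word.

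Two smaller points about the orbit searches. Orbit sizes need not divide $M$, since a code can mix orbits of different sizes and fixed points. Also, a cyclic group of order $49$ cannot produce orbits of size $49$ on $\{0,1\}^{24}$: no isometry of $\{0,1\}^{24}$ has order $49$, because that would require a coordinate cycle of length divisible by $49$. Orbits of size $49$ would need a non-cyclic group such as $C_7\times C_7$.
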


\begin{proof}
The complete sets in Section~\ref{sec:verification} contain, respectively, $84$, $196$, and
$65$ distinct words of the indicated lengths. Exhaustive comparison of
all unordered pairs gives minimum distances $9$, $9$, and $11$,
respectively. Each set is therefore a witness
for the corresponding bound.
\end{proof}

\begin{table}[H]
\centering
\caption{Comparison of lower bounds for binary codes.}
\label{tab:lower-bounds}
\begin{tabular}{ccr@{\ }lcc}
\toprule
$n$ & $d$ & \multicolumn{2}{c}{Previous lower bound} & Our results & Improvement\\
\midrule
$22$~bits & $9$~bits  & $80$~words  & \citep{ostergard}  & $84$~words  & $+4$~words\\
$24$~bits & $9$~bits  & $192$~words & \citep{elssel}     & $196$~words & $+4$~words\\
$25$~bits & $11$~bits & $64$~words  & \citep{levenshtein} & $65$~words  & $+1$~word\\
\bottomrule
\end{tabular}
\end{table}

The previous bounds in Table~\ref{tab:lower-bounds} are listed in
\citeauthor{brouwer}'s online table of general binary codes
(accessed October 1, 2026).
The bound for $(24,9)$ is obtained by shortening the construction of
\citet{elssel}.
For odd $d$, parity extension and puncturing give
$A_2(n,d)=A_2(n+1,d+1)$. Thus the same constructions give lower bounds
$84$, $196$, and $65$ for $(n,d)=(23,10),(25,10),(26,12)$.

At unchanged block length and minimum distance, the new codes support more
messages. For the 22-bit code, $k$ blocks support $(84/80)^k=1.05^k$ times
as many message sequences as the previous construction. For example,
100 blocks totaling 2,200 bits support approximately 131 times as many
message sequences, while guaranteeing correction of any four bit errors
per block.

\section{Verification and Explicit Codes}
\label{sec:verification}

Each code below is listed as whitespace-separated decimal integers. Each
integer represents an $n$-bit word, padded with leading zeros and written
with the most significant bit first.
All three codes pass the following checks, implemented using exact integer
arithmetic in Python 3.10 or later:
\begin{Verbatim}[baselinestretch=0.9,samepage=true,vspace=2pt,listparameters={\setlength{\partopsep}{0pt}}]
from itertools import combinations

def check(words, n, M, d):
    assert len(words) == len(set(words)) == M
    assert all(0 <= w < 2**n for w in words)
    distances = [(x ^ y).bit_count() for x, y in combinations(words, 2)]
    return min(distances) >= d
\end{Verbatim}

\begingroup
\setlength{\parskip}{2pt}
\makeatletter
\renewcommand\subsection{\@startsection{subsection}{2}{\z@}%
  {-6pt}{3pt}{\normalfont\large\bfseries}}
\makeatother
\fvset{vspace=2pt,listparameters={\setlength{\partopsep}{0pt}}}
\subsection*{$(22,84,9)$ code}
\begin{Verbatim}[fontsize=\normalsize,baselinestretch=0.858333333]
       0   178141   190058   204400   221583   297724   324867   408135   411064   497962
  601801   610614   665964   677523   755620   784475   814110   953681   974054   988173
 1010674  1086962  1108493  1117358  1137945  1306081  1312347  1341860  1419411  1431404
 1486646  1495241  1579221  1686456  1688647  1772291  1799420  1875855  1892464  1914402
 1927061  2064895  2134127  2158480  2177474  2189429  2330143  2365237  2471000  2476967
 2557155  2587420  2623931  2651716  2729843  2742412  2793686  2809641  2923026  2942445
 2949966  2970361  3138049  3171326  3310257  3330822  3349485  3367954  3481897  3497686
 3548812  3561843  3639364  3667899  3704092  3734243  3814823  3820120  3905226  3936224
 4110397  4123018  4133264  4157039
\end{Verbatim}

\subsection*{$(24,196,9)$ code}
\begin{Verbatim}[fontsize=\normalsize,baselinestretch=0.858333333]
       0    19421   118546   355397   363118   444713   570249   603974   621723   782447
  790750   849653  1029554  1160280  1242911  1326646  1340907  1437988  1458295  1584872
 1624259  1805188  1871295  1902960  1917613  2210677  2235929  2291762  2538830  2570625
 2585180  2672622  2843266  2852009  2916915  3018439  3032346  3129813  3224107  3310456
 3344311  3361898  3507523  3589636  3813617  3832620  3932131  3996120  4152500  4176543
 4303288  4331755  4430884  4446201  4529872  4597758  4611479  4893538  4940912  5005899
 5185804  5228327  5316838  5399457  5408138  5743122  5761487  5790845  5956316  5971770
 6027537  6198612  6217353  6249030  6369227  6378914  6450316  6596005  6614648  6646455
 6740964  6837072  6862203  7016471  7083564  7199038  7379859  7393358  7672317  7731158
 7754426  7874586  7974101  7989000  8154470  8162125  8228992  8334881  8410746  8556883
 8575630  8590748  8655783  8868064  8894155  9070855  9152021  9170376  9269052  9317499
 9357904  9448849  9662166  9668861  9875301  9888952  9907114 10047754 10097229 10348130
10377009 10459171 10474494 10494652 10536087 10707451 10774464 10856658 10871567 10953766
11099687 11122700 11148640 11313044 11326537 11344731 11557366 11638500 11656505 11803786
11860641 12017613 12100002 12118655 12133741 12227646 12410897 12460886 12632677 12651895
12869938 12879641 13091934 13122097 13136364 13221630 13242285 13468139 13499266 13547717
13655812 13678895 13714633 13892200 13958227 13972289 13990044 14230431 14271924 14321395
14418951 14438362 14726658 14881646 14939461 15105835 15183929 15321753 15370718 15397815
15564273 15649954 15664048 15681645 15862464 15876381 15961615 16025652 16135829 16199000
16239475 16465286 16479835 16643247 16667075 16693224
\end{Verbatim}

\begin{samepage}
\subsection*{$(25,65,11)$ code}
\begin{Verbatim}[fontsize=\normalsize,baselinestretch=0.858333333]
       0   129055   131040   932067  1592955  2867445  3299692  3475174  3957135  4037008
 4326367  5549386  5592405  5936553  5991862  6017602  6747174  6753337  6894810  7797465
 7944741  9872229 10788351 11154709 11250554 12448873 13390138 13479637 15163840 15255372
15845546 16648278 17671475 17996470 19387307 19442860 20865584 21905276 21940485 22111881
23351186 24355100 24942111 25323961 25325126 25400921 25974437 26556620 26935343 28325674
28332243 28430133 29233097 29358038 29463942 30319242 30852348 30882579 30997487 30997520
30999536 31758944 31810175 32201372 32202083
\end{Verbatim}
\end{samepage}

\endgroup
{
\makeatletter
\renewcommand\section{\@startsection{section}{1}{\z@}%
  {-8pt}{4pt}{\normalfont\Large\bfseries}}
\makeatother
\section*{Acknowledgements}
This work was partially supported by JST Moonshot R\&D Grant Number
JPMJPS2011 and by the New Energy and Industrial Technology Development
Organization (NEDO) under project JPNP26007. Kazuki Ota was supported by
JST SPRING, Grant Number JPMJSP2108. Takayuki Osa was supported by JSPS
KAKENHI Grant Number JP25K03176.

}

\end{document}